%BeginFileInfo
%%Publisher=ARXIV
%%Project=AAP
%%Manuscript=AAP997
%%Stage=
%%TID=linak
%%Format=latex
%%Distribution=arXiv
%%Destination=PDF
%%PDF type=
%%PDF.Maker=arXiv_tex_pdf
%EndFileInfo
%
% Institute of Mathematical Statistics (IMI)
% Journal "The Annals of Applied Probabability"

%secthm,secfloat,number,noautosecdot
\RequirePackage{etoolbox}
\csdef{input@path}{{style/}{graphics/}}
\documentclass[aap,MSNbibl,nameyear,seceqn,dvips]{arximspdf}
\usepackage{url,breakurl}
% settings
%

% article settings
\doi{10.1214/13-AAP997} %kopijuoti is PTS
\volume{25}
\issue{2}
\pubyear{2015}
\firstpage{465}
\lastpage{476}

\makeatletter
\newcommand{\rrVert}{\Vert}
\newcommand{\llVert}{\Vert}
\newtheorem{theorem}{Theorem}[section]
\newtheorem{lemma}[theorem]{Lemma}
\newproclaim{definition}[theorem]{Definition}
\newtheorem{fact}[theorem]{Fact}
\newproclaim{question}[theorem]{Question}
\newcommand{\pr}{\operatorname{Pr}}
\newcommand{\ignore}[1]{}
\newcommand{\poly}{\operatorname{poly}}
\newcommand{\hX}{\widehat{X}}
\makeatother

\begin{document}
\begin{frontmatter}

\title{A polynomial time approximation scheme for computing the supremum of Gaussian processes\thanksref{T1}}
\runtitle{Computing supremum of Gaussian processes}

\begin{aug}
\author[A]{\fnms{Raghu} \snm{Meka}\corref{}\ead[label=e1]{meka@microsoft.com}}
\runauthor{R. Meka}
\affiliation{Microsoft Research}
\address[A]{Microsoft Research\\
1288 Pear Avenue\\
Mountain View, California 94043\\
USA\\
\printead{e1}} %adresu isvedimo komanda gale!
\end{aug}
\thankstext{T1}{A preliminary version of this paper appeared in the
conference Foundations of Computer Science, 2012.}

% HISTORY:
\received{\smonth{7} \syear{2013}}

% ABSTRACT
%
\begin{abstract}
We give a polynomial time approximation scheme (PTAS) for computing the
supremum of a Gaussian process. That is, given a finite set of vectors
$V \subseteq\mathbb{R}^d$, we compute a $(1+\varepsilon)$-factor
approximation to
$\mathop{\mathbb{E}}_{X \leftarrow\mathcal{N}^d}[\sup_{v \in V}|\langle v, X\rangle|]$ deterministically in
time $\poly(d)\cdot|V|^{O_\varepsilon(1)}$. Previously, only a constant
factor deterministic polynomial time approximation algorithm was known
due to the work of Ding, Lee and Peres
[\textit{Ann. of Math.} (\textit{2}) \textbf{175} (2012) 1409--1471].
This answers an open question of Lee (2010) and Ding [\textit{Ann. Probab.}
\textbf{42} (2014) 464--496].

The study of supremum of Gaussian processes is of considerable
importance in probability with applications in functional analysis,
convex geometry, and in light of the recent breakthrough work of Ding, Lee and Peres
[\textit{Ann. of Math.} (\textit{2}) \textbf{175} (2012) 1409--1471],
to random walks on finite graphs. As such our result could
be of use elsewhere. In particular, combining with the work of Ding
[\textit{Ann. Probab.}
\textbf{42} (2014) 464--496], our result yields a PTAS for computing the cover time of
bounded-degree graphs. Previously, such algorithms were known only for trees.

Along the way, we also give an explicit oblivious estimator for
semi-norms in Gaussian space with optimal query complexity. Our
algorithm and its analysis are elementary in nature, using two
classical \textit{comparison inequalities}, Slepian's lemma and Kanter's lemma.
\end{abstract}

% KEYWORDS
% Pirmas kwd is didziosios raides
%
\begin{keyword}[class=AMS]
\kwd[Primary ]{60C05}
\kwd[; secondary ]{68Q87}
\end{keyword}
\begin{keyword}
\kwd{Gaussian processes}
\kwd{derandomization}
\kwd{cover time}
\kwd{random walks}
\kwd{$\varepsilon$-nets}.
\end{keyword}

\end{frontmatter}

\setcounter{footnote}{1}
%s1 #&#
\section{Introduction}
%We address the question of computing the supremum of a Gaussian
%process: given a set of vectors $v_1,\ldots,v_m \in\R^d$, compute or
%even approximate $\ex_{x \lfta\N^d}[\sup_i|\inp{v_i,x}|]$ and give
%the first PTAS for the problem. %Gaussian process $\{X_t\}_{t \in T}$,
%compute or even approximate $\ex[\sup_t X_t]$. Here, {\it Gaussian
%process} refers to a collection of real-valued random variables where
%any finite linear combination from the collection has a Gaussian
%distribution. Though, our results can be stated in greater generality,
%for concreteness and clarity we mainly focus on Gaussian processes $
%and random variables $X_v \equiv\inp{v,x}$ where $x \lfta\N^d$ is
%drawn from the $d$-dimensional spherical Gaussian distribution.

The study of supremum of Gaussian processes is a major area of study in
probability and functional analysis as epitomized by the celebrated
\textit{majorizing measures} theorem of Fernique and Talagrand; see \citet{LedouxT}, \citet{Talagrand05} and references therein. There is by now a
rich body of work on obtaining tight estimates and characterizations of
the supremum of Gaussian processes with several applications in
analysis \citet{Talagrand05}, convex geometry \citet{Pisier99} and more.
Recently, in a striking result, \citet{DingLP11} used the theory to
resolve the \textit{blanket time} conjectures of \citet{WinklerZ96}.%,
%indicating the usefulness of Gaussian processes even for the study of
%combinatorial problems over discrete domains.

\citet{DingLP11} used the powerful \textit{Dynkin isomorphism theory} and
majorizing measures theory to establish a structural connection between
the cover time (and blanket time) of a graph $G$ and the supremum of a
Gaussian process associated with the Gaussian Free Field on $G$. They
then use this connection to resolve the Winkler--Zuckerman blanket time
conjectures and to obtain the first deterministic polynomial time
constant factor approximation algorithm for computing the cover time of
graphs. This latter result resolves an old open question of \citet{AldousF}.

%They then gave a polynomial time constant factor approximating
%algorithm for computing the supremum of the arising Gaussian process,
%thus obtaining the first deterministic polynomial time constant factor
%approximation algorithm for computing the cover time of graphs.

Besides showing the relevance of the study of Gaussian processes to
discrete combinatorial questions, the work of Ding, Lee and Peres gives
evidence that studying Gaussian processes could even be an important
algorithmic tool; a less investigated aspect in the rich literature on
Gaussian processes in probability and functional analysis. Here we
address the corresponding computational question directly, which given
the importance of Gaussian processes in probability, could be of use
elsewhere. In this context, the following question was asked by \citet
{leepost} and \citet{Ding11}.\hskip.2pt\footnote{We remark that \citet{leepost} and
\citet{Ding11} actually ask for an approximation to $\mathop{\mathbb{E}}_{X
\leftarrow\mathcal{N}
^d}[\sup_i \langle v _ i, X\rangle]$. However, this formulation
results in a
somewhat artificial asymmetry, and for most interesting cases these two\vspace*{1pt}
are essentially equivalent: if $\mathop{\mathbb{E}}_{X \leftarrow\mathcal
{N}^d}[\sup_i \langle v _ i, X\rangle]
= \omega(\max_i \llVert v_i\rrVert _2)$, then $\mathop{\mathbb{E}}_{X
\leftarrow\mathcal{N}^d}[\sup_I |\langle v _ i, X\rangle|] =
(1+o(1)) \mathop{\mathbb{E}}_{X \leftarrow\mathcal{N}^d}[\sup_i \langle v
_ i, X\rangle]$. We shall
overlook this distinction from now on.}

%
%qu1.1 #&#
\begin{question}
For every $\varepsilon> 0$, is there a deterministic polynomial time
algorithm that, given a set of vectors $v_1,\ldots,v_m \in\mathbb{R}^d$,
computes a $(1 + \varepsilon)$-factor approximation to $\mathop{\mathbb{E}}_{X
\leftarrow\mathcal{N}
^d}[\sup_i |\langle v _ i,X\rangle|]$.\footnote{Throughout,
$\mathcal{N}$ denotes the
univariate Gaussian distribution with mean $0$ and variance $1$, and
for a distribution ${\mathcal D}$, $X \leftarrow{\mathcal D}$ denotes
a random
variable with distribution ${\mathcal D}$. By a $\alpha$-factor
approximation to a quantity $Z > 0$, we mean a number $p$ such that $p
\leq Z \leq\alpha p$.}
\end{question}

There is a simple randomized algorithm for the problem: sample a few
Gaussian vectors and output the median supremum value for the sampled
vectors. This, however, requires $O(d\log d/\varepsilon^2)$ random bits.
Using Talagrand's majorizing measures theorem, Ding, Lee and Peres give
a deterministic polynomial time $O(1)$-factor approximation algorithm
for the problem. This approach is inherently limited to not yield a
PTAS as the majorizing measures characterization is bound to lose a
universal constant factor. Here we give a PTAS for the problem thus
resolving the above question.
%Note that the approach of Ding, Lee and Peres is inherently limited to
%not yeild a PTAS as it relies on the majorizing measures theory, which
%is bound to lose a universal constant factor.
%Using Talagrand's majorizing measures theorem, Ding, Lee and Peres
%give a deterministic polynomial time algorithm that gives a constant
%factor approximation. This approach is inherently limited to only
%yield a constant factor approximation as majorizing measures
%characterization is bound to lose a universal constant factor.

%
%th1.2 #&#
\begin{theorem}\label{thmain}
For every $\varepsilon> 0$, there is a deterministic algorithm that,
given a set of vectors $v_1,\ldots,v_m \in\mathbb{R}^d$, computes a
$(1 +
\varepsilon)$-factor approximation to $\mathop{\mathbb{E}}_{x \leftarrow
\mathcal{N}^d}[\sup_i |\langle v _ i,x\rangle|]$ in time $\poly(d)
\cdot m^{\widetilde{O}(1/\varepsilon^2)}$.
\end{theorem}

Our approach uses some classical \textit{comparison inequalities} in
convex geometry. To the best of our knowledge these inequalities have
not been used before in the context of algorithm design.% and we
%believe our techniques could be useful elsewhere. %properties of
%Gaussian processes and certain \textit{unimodal} distributions.

We explain our result on estimating semi-norms with respect to Gaussian
measures mentioned in the abstract in Section~\ref{seclinest}.

We next discuss some applications of our result to computing cover
times of graphs as implied by the works of \citet{DingLP11} and \citet{Ding11}.
%s1.1 #&#
\subsection{Application to computing cover times of graphs}
The study of random walks on graphs is an important area of research in
probability, algorithm design, statistical physics and more. As this is
not the main topic of our work, we avoid giving formal definitions and
refer the readers to \citet{AldousF}, \citet{Lovasz93} for background
information.

Given a graph $G$ on $n$-vertices, the cover time, $\tau_{\mathrm{cov}}(G)$, of
$G$ is defined as the expected time a random walk on $G$ takes to visit
all the vertices in $G$ when starting from the worst possible vertex in
$G$. Cover time is a fundamental parameter of graphs and is extensively
studied. Algorithmically, there is a simple randomized algorithm for
approximating the cover time---simulate a few trials of the random walk
on $G$ for $\poly(n)$ steps, and output the median cover time. However,
without randomness the problem becomes significantly harder. This was
one of the motivations of the work of \citet{DingLP11} who gave the
first deterministic constant factor approximation algorithm for the
problem, improving on an earlier work of \citet{KahnKLV00} who obtained
a deterministic $O((\log\log n)^2)$-factor approximation algorithm.
For the special case of trees, \citet{FeigeZ09} gave a FPTAS.

Ding, Lee and Peres also conjectured that the cover time of a graph $G$
(satisfying a certain reasonable technical condition) is asymptotically
equivalent to the supremum of an explicitly defined Gaussian process,
the Gaussian Free Field on~$G$. However, this conjecture though quite
interesting on its own, is not enough to give a PTAS for cover time;
one still needs a PTAS for computing the supremum of the relevant
Gaussian process. Our main result provides this missing piece, thus
removing one of the obstacles in their posited strategy to obtain a
PTAS for computing the cover time of graphs. Recently, \citet{Ding11}
showed the main conjecture of Ding, Lee and Peres to be true for
bounded-degree graphs and trees. Thus, combining his result [see
Theorem~1.1 in \citet{Ding11}] with Theorem~\ref{thmain}, we get a
PTAS for
computing cover time on bounded-degree graphs with $\tau_{\mathrm{hit}}(G) =
o(\tau_{\mathrm{cov}}(G))$.\footnote{The hitting time $\tau_{\mathrm{hit}}(G)$ is defined
as the maximum over all pairs of vertices $u,v \in G$ of the expected
time for a random walk starting at $u$ to reach $v$. See the discussion
in \citet{Ding11} for why this is a reasonable condition.} As mentioned
earlier, previously, such algorithms were only known for trees; see
\citet{FeigeZ09}.
\ignore{
%
%th1.3 #&#
\begin{theorem}
For every $\varepsilon> 0$ and $\Delta> 0$ there exists a constant
$C_{\Delta,\varepsilon}$ such that the following holds. For every graph
$G$ with maximum degree at most $\Delta$ and $\tau_{\mathrm{hit}}(G) <
C_{\Delta,\varepsilon}(\tau_{\mathrm{cov}}(G))$, there exists a deterministic
$n^{O_{\Delta,\varepsilon}(1)}$-time algorithm to compute a $(1+\varepsilon)$-factor
approximation to $\tau_{\mathrm{cov}}(G)$.
\end{theorem}
}

%s2 #&#
\section{Outline of algorithm}
The high level idea of our PTAS is as follows. Fix the set of vectors
$V = \{v_1,\ldots,v_m\} \subseteq\mathbb{R}^d$ and $\varepsilon>
0$. Without
loss of generality suppose that $\max_{v \in V} \llVert v\rrVert _2 =
1$. We first
reduce the dimension of $V$ by projecting $V$ onto a space of dimension
of $O((\log m)/\varepsilon^2)$ \'a la the classical Johnson--Lindenstrauss
lemma (JLL). We then give an algorithm that runs in time \mbox{polynomial} in
the number of vectors but exponential in the underlying dimension. Our
analysis relies on two comparison inequalities, Fernique--Slepian lemma [\citet
{Slepian62}] for the first step and Kanter's lemma [\citet{Kanter77}]
for the second step. We discuss these modular steps below.
%{\mathbf Dimension reduction.}
%s2.1 #&#
\subsection{Dimension reduction} We project the set of vectors
$V\subseteq\mathbb{R}^d$ to $\mathbb{R}^k$ for $k = O((\log
m)/\varepsilon^2)$ to
preserve all pairwise (Euclidean) distances within a \mbox{$(1+\varepsilon
)$-}fac\-tor as in the Johnson--Lindenstrauss lemma (JLL). We then show
that the expected supremum of the \textit{projected} Gaussian process is
within a $(1 + \varepsilon)$ factor of the original value. The intuition
is that the supremum of a Gaussian process, though a global property,
can be controlled by pairwise correlations between the variables. To
quantify this, we use Slepian's lemma, that helps us relate the
supremum of two Gaussian processes by comparing pairwise correlations.
Finally, observe that using known derandomizations of JLL, the
dimension reduction can be done deterministically in time $\poly
(d,m,1/\varepsilon)$; see \citet{DJLL,Sivakumar02}.

Thus, to obtain a PTAS it would be enough to have a deterministic
algorithm to approximate the supremum of a Gaussian process in time
exponential in the dimension $k = O((\log m)/\varepsilon^2)$.
Unfortunately, a naive argument by discretizing the Gaussian measure in
$\mathbb{R}^k$ leads to a run-time of at least $k^{O(k)}$; which gives a
$m^{O((\log\log m)/\varepsilon^2)}$ algorithm. This question was recently
addressed by \citet{DadushV12}, who needed a similar sub-routine for
their work on computing \textit{M-Ellipsoids} of convex sets and give a
deterministic algorithm with a run-time of $(\log k)^{O(k)}$. We
resolve this question fully by giving an optimal oblivious estimator
for norms in Gaussian space, which when combined with the dimension
reduction step gives a PTAS for computing the supremum. %Combining
%their algorithm with the dimension reduction step gives a
%deterministic $m^{O((\log\log\log m)/\epsilon^2)}$ time algorithm
%for approximating the supremum. We next get rid of this $\omega(1)$
%dependence in the exponent.
%{\mathbf Epsilon-Nets in Gaussian Space.}
%s2.2 #&#
\subsection{Oblivious estimators for semi-norms}\label{seclinest}
% We in fact, solve a more general problem by constructing an optimal {
%formalize this more general context.
Let $\varphi\dvtx \mathbb{R}^k \rightarrow\mathbb{R}_+$ be a semi-norm,
that is, $\varphi$ is
homogeneous and satisfies triangle inequality. For normalization
purposes, we assume that $1 \leq\mathop{\mathbb{E}}_{x \leftarrow\mathcal
{N}^k}[\varphi(x)]$ and that
the Lipschitz constant of $\varphi$ is at most $k^{O(1)}$. This is
satisfied in most reasonable cases. Note that the supremum function
$\varphi_V(x) = \sup_{v \in V}|\langle v,  x\rangle|$ satisfies
these conditions. Our
goal will be to compute a $(1+\varepsilon)$-factor approximation to
$\mathbb{E}
_{x \leftarrow\mathcal{N}^k}[\varphi(x)]$ in time $2^{O_\varepsilon(k)}$.
%We answer this question by giving an {\it oblivious} estimater that
%only needs oracle access to $\phi$.
\ignore{
%
%th2.1 #&#
\begin{theorem}\label{thepsnet}
For every $\varepsilon> 0$, there exists a distribution ${\mathcal D}$ on
$\mathbb{R}^k$ which can be sampled using $O(k\log(1/\varepsilon))$
bits in time
$\poly(k,1/\varepsilon)$ and space $O(\log k + \log(1/\varepsilon
))$ such
that for every semi-norm $\varphi\dvtx \mathbb{R}^k \rightarrow\mathbb{R}_+$,
\[
(1-\varepsilon) \mathop{\mathbb{E}}_{x \leftarrow{\mathcal D}}\bigl[\varphi(x)\bigr] \leq
\mathop{\mathbb{E}}_{x \leftarrow\mathcal{N}
^k}\bigl[\varphi(x)\bigr] \leq(1+\varepsilon)
\mathop{\mathbb{E}}_{x \leftarrow
{\mathcal D}}\bigl[\varphi(x)\bigr].
\]
In particular, there exists a deterministic $(1/\varepsilon)^{O(k)}$-time
algorithm for computing a $(1+\varepsilon)$-factor approximation to
$\mathbb{E}
_{X \leftarrow\mathcal{N}^k}[\varphi(X)]$ using only oracle access
to $\varphi$.
\end{theorem}
}

%%%%%%%%
%
%th2.2 #&#
\begin{theorem}\label{thepsnetintro}
For every $\varepsilon> 0$, there exists a deterministic algorithm
running in time $(1/\varepsilon)^{O(k)}$ and space $\poly
(k,1/\varepsilon)$
that computes a $(1+\varepsilon)$-factor approximation to $\mathbb
{E}_{X \leftarrow\mathcal{N}
^k}[\varphi(X)]$ using only oracle access to $\varphi$.
\end{theorem}

Our algorithm has the additional property of being an \textit{oblivious
linear estimator}: the set of query points does not depend on $\varphi$,
and the output is a positive weighted sum of the evaluations of
$\varphi$
on the query points. Further, the construction is essentially optimal
as any such oblivious estimator needs to make at least $(1/\varepsilon
)^{\Omega(k)}$ queries; see Section~\ref{secappendix}. In
comparison, the
previous best bound of Dadush and Vempala [\citet{DadushV12}] needed
$((\log k)/\varepsilon)^{O(k)}$ queries. %We also remark that the query
%points of our algorithm are essentially the same as that of Dadush and
%Vempala, however our analysis is quite different and leads to better
%parameters.

A natural first approach to compute $\mathop{\mathbb{E}}_{X \leftarrow\mathcal
{N}^k}[\varphi(X)]$,
would be to first discretize the one-dimensional Gaussian distribution
with a constant granularity $\delta= f(\varepsilon)$ to get a
distribution $\mu$ and then evaluate the expectation with respect to
the product distribution $\mu^k$. We will show that this seemingly
naive approach in fact does very well, giving an error bound that does
not depend on the dimension $k$. We do so by using a classical
comparison inequality---Kanter's lemma---that allows us to ``lift'' a
simple estimator for the univariate case to the multi-dimensional case.

More concretely, we first construct a symmetric distribution $\mu$ on
$\mathbb{R}$ that has a simple \textit{piecewise flat graph} and \textit{sandwiches}
the one-dimensional Gaussian distribution in the following sense. Let
$\nu$ be a ``shrinking'' of $\mu$ defined to be the \mbox{probability} density
function (p.d.f.) of $(1-\varepsilon)x$ for $x \leftarrow\mu$. We show
that if
$\mu$ has \textit{granularity} about $\varepsilon^{3/2}$, then, for every
symmetric interval $I \subseteq\mathbb{R}$, $\mu(I) \leq\mathcal
{N}(I) \leq\nu(I)$.

Kanter's lemma [\citet{Kanter77}] then says that for p.d.f.'s $\mu,\nu$ as
above that are in addition \textit{unimodal}, the above relation carries
over to the product distributions $\mu^k, \nu^k$: for every symmetric
convex set $K \subseteq\mathbb{R}^k$, $\mu^k(K) \leq\mathcal
{N}^k(K) \leq\nu^k(K)$.
This last inequality immediately implies that semi-norms cannot \textit{distinguish} between $\mu^k$ and $\mathcal{N}^k$: for any semi-norm
$\varphi$, $\mathbb{E}
_{\mu^k}[\varphi(x)] = (1\pm\varepsilon)\mathop{\mathbb{E}}_{\mathcal
{N}^k}[\varphi(x)]$. We then
suitably prune the distribution $\mu^k$ to have small support and prove
Theorem~\ref{thepsnet}.

Our main result, Theorem~\ref{thmain}, follows by first reducing the
dimension as in the previous section and applying Theorem~\ref
{thepsnet} to
the semi-norm $\varphi\dvtx \mathbb{R}^k \rightarrow\mathbb{R}_+$,
$\varphi(x) = \sup_i|\langle u _ i,x\rangle|$
for the projected vectors $\{u_1,\ldots,u_m\}$.

%s3 #&#
\section{Dimension reduction}
The use of JLL type random projections for estimating the supremum
comes from the following comparison inequality for Gaussian processes.
We call a collection of real-valued random variables $\{X_t\}_{t \in
T}$ a~Gaussian process if every finite linear combination of the
variables has a normal distribution with mean zero. We refer the reader
to Corollary~3.14 and the following discussion in \citet{LedouxT} for
reference. %To avoid some technicalities we assume that one of the
%variables in the collection is identically zero; this does not alter
%any of our arguments.

%
%th3.1 #&#
\begin{theorem}[(Fernique--Slepian lemma)]\label{lmslepian}
  Let $\{X_t\}_{t \in T}$ and $\{Y_t\}_{t \in T}$ be two
  Gaussian processes such that for every $s,t \in T$,
  $\mathbb{E}[(X_s - X_t)^2] \leq \mathbb{E}[(Y_s - Y_t)^2]$. Then,
  $\mathbb{E}[\sup_t |X_t|] \leq \mathbb{E}[\sup_t |Y_t|]$.
\end{theorem}

%Let $\{X_t\}_{t \in T}$ and $\{Y_t\}_{t \in T}$ be two Gaussian
%processes such that for every $s,t \in T$, $\mathbb{E}[(X_s - X_t)^2]
%[(Y_s - Y_t)^2]$. Then, $\mathbb{E}[\sup_t X_t] \leq\mathbb{E}[\sup_t Y_t]$.
%
We also need a derandomized version of the Johnson--Lindenstrauss lemma.

%th3.2 #&#
\begin{theorem}[{[\citet{DJLL}]}]\label{thdjll}
For every $\varepsilon> 0$, there exists a deterministic $(d m^2 (\log m
+ 1/\varepsilon)^{O(1)})$-time algorithm that given vectors
$v_1,\ldots,v_m \in\mathbb{R}^d$ computes a linear mapping $A\dvtx \mathbb{R}^d
\rightarrow\mathbb{R}^k$ for $k =
O((\log m)/\varepsilon^2)$ such that for every $i,j \in[m]$, $\llVert
v_i - v_j\rrVert _2 \leq\llVert A(v_i) - A(v_j)\rrVert _2 \leq
(1+\varepsilon)\llVert v_i - v_j\rrVert _2$.
\end{theorem}

Combining the above two theorems immediately implies the following.

%le3.3 #&#
\begin{lemma}\label{lmderandjl}
For every $\varepsilon> 0$, there exists a deterministic $(d m^2 (\log m
+ 1/\varepsilon)^{O(1)})$-time algorithm that given vectors
$v_1,\ldots,v_m \in\mathbb{R}^d$ computes a linear mapping $A\dvtx \mathbb{R}^d
\rightarrow\mathbb{R}^k$ for $k =
O((\log m)/\varepsilon^2)$ such that
%
%
%e3.1 #&#
\begin{eqnarray}\label{eqjllsup}
\mathop{\mathbb{E}}_{x \leftarrow\mathcal{N}^d}\Bigl[\sup_i \bigl|
\langle v _ i,x\rangle\bigr|\Bigr] &\leq& \mathop{\mathbb{E}}_{y \leftarrow\mathcal{N}^k}\Bigl[\sup
_i \bigl|\bigl\langle A, ( v_i),y\bigr\rangle\bigr|\Bigr]
\nonumber\\[-8pt]\\[-8pt]
&\leq& (1+\varepsilon) \mathop{\mathbb{E}}_{x
\leftarrow\mathcal{N}^d}\Bigl[\sup_i \bigl|
\langle v _ i,x\rangle\bigr|\Bigr].\nonumber
\end{eqnarray}
\end{lemma}

\begin{pf}
Let $V = \{v_1,\ldots,v_m\} \cup\{-v_1,\ldots,-v_m\}$, and let $\{
X_v\}_{v \in V}$ be the Gaussian process where the joint distribution
is given by $X_v \equiv\langle v,  x\rangle$ for $x \leftarrow
\mathcal{N}^d$. Then $\mathop{\mathbb{E}}_{x
\leftarrow\mathcal{N}^d}[\sup_i |\langle v _ i,x\rangle|] = \mathbb
{E}[\sup_v X_v]$.\vspace*{1pt}

Let $A\dvtx \mathbb{R}^d \rightarrow\mathbb{R}^k$ be the linear mapping
as given by Theorem~\ref{thdjll} applied to $V$. Let $\{Y_v\}_{v \in V}$ be the ``projected''
Gaussian process with joint distribution given by $Y_v \equiv  \langle
A, ( v),y\rangle$ for $y \leftarrow\mathcal{N}^k$. Then $\mathbb
{E}_{y \leftarrow\mathcal{N}^k}[\sup_i |\langle v _ i,y\rangle|] =
\mathbb{E}[\sup_v Y_v]$.

Finally, observe that for any $u,v \in V$,
\begin{eqnarray*}
\mathbb{E}\bigl[(X_u - X_v)^2\bigr] &=&
\llVert u-v\rrVert _2^2 \leq\bigl\llVert A(u) - A(v)\bigr
\rrVert _2^2
\\
&=& \mathbb{E}\bigl[(Y_u -
Y_v)^2\bigr] \leq(1+\varepsilon)^2\mathbb{E}
\bigl[(X_u - X_v)^2\bigr].
\end{eqnarray*}

Combining the above inequality with Lemma~\ref{lmslepian}
applied to the pairs of processes $ (\{X_v\}_{v \in V}, \{Y_v\}_{v
\in V} )$ and $ (\{Y_v\}_{v \in V}, \{(1+\varepsilon)X_v\}
_{v \in
V} )$ it follows that
 \[
 \mathbb{E}\Bigl[\sup_v |X_v|\Bigr] \leq \mathbb{E}\Bigl[\sup_v |Y_v|\Bigr] \leq
 \mathbb{E}\Bigl[\sup_v (1+\varepsilon)|X_v|\Bigr] = (1+\varepsilon)
 \mathbb{E}\Bigl[\sup_v |X_v|\Bigr].
 \]
%{E}\Bigl[\sup_v (1+\varepsilon)X_v\Bigr] = (1+
%
The lemma now follows.
\end{pf}

%s4 #&#
\section{Oblivious estimators for semi-norms in Gaussian space}\label
{secepsnet}
In the previous section we reduced the problem of computing the
supremum of a $d$-dimensional Gaussian process to that of a Gaussian
process in $k = O((\log m)/\break \varepsilon^2)$-dimensions. Thus it
suffices to
have an algorithm for approximating the supremum of Gaussian processes
in time exponential in the dimension. We will give such an algorithm
that works more generally for all semi-norms.

Let $\varphi\dvtx \mathbb{R}^k \rightarrow\mathbb{R}_+$ be a semi-norm.
That is, $\varphi$ satisfies
the triangle inequality and is homogeneous. For normalization purposes
we assume that $1 \leq\mathop{\mathbb{E}}_{\mathcal{N}^k}[\varphi(X)]$ and
the Lipschitz constant
of $\varphi$ is at most $k^{O(1)}$. %In particular, $\phi_V\dvtx \R^k \rgta
%satisfies the condition. %Note that by definition, $\ex[\phi_V(x)]$ is
%exactly the supremum of the Gaussian process indexed by $V$.

%
%th4.1 #&#
\begin{theorem}\label{thepsnet}
For\vspace*{1pt} every $\varepsilon> 0$, there exists a set $S \subseteq\mathbb
{R}^k$ with
$|S| = (1/\varepsilon)^{O(k)}$ and a function $p\dvtx \mathbb{R}^k
\rightarrow\mathbb{R}_+$
computable in $\poly(k,1/\varepsilon)$ time such that the following holds.
For every semi-norm $\varphi\dvtx \mathbb{R}^k \rightarrow\mathbb{R}_+$,
\[
(1 - \varepsilon) \biggl(\sum_{x \in S} p(x) \varphi(x)
\biggr) \leq \mathop{\mathbb{E}}_{X
\leftarrow\mathcal{N}^k}\bigl[\varphi(X)\bigr] \leq(1 + \varepsilon)
\biggl(\sum_{x \in S} p(x) \varphi (x) \biggr).
\]
Moreover, successive elements of $S$ can be enumerated in $\poly
(k,1/\varepsilon)$ time and $O(k\log(1/\varepsilon))$ space.
\end{theorem}

Theorem~\ref{thepsnetintro} follows immediately from the above.

\begin{pf*}{Proof of Theorem~\ref{thepsnetintro}}
Follows by enumerating over the set $S$ and computing $\sum_{x \in S}
p(x) \varphi(x)$ by querying $\varphi$ on the points in $S$. %Note
%that this
%can be done in $O(k\log(1/\epsilon))$ space.
\end{pf*}

We now prove Theorem~\ref{thepsnet}.
%Note that, a naive argument by discretizing $\R^k$ (by rounding $x \in
%using a pointwise bound would at best (depending on the Lipschitz
%constant of $\phi$) only yield a $2^{\Omega(k\log k)}$ algorithm. Our
%savings come from carefully analyzing the error incurred by a simple
%modification of the standard rounding procedure.
Here and henceforth, let $\gamma$ denote the p.d.f. of the standard
univariate Gaussian distribution. Fix $\varepsilon> 0$, and let
$\delta>
0$ be a parameter to be chosen later. Let $\mu\equiv\mu_{\delta}$ be
the p.d.f. which is a piecewise-flat approximator to $\gamma$ obtained by
spreading the mass $\gamma$ gives to an interval $I = [i\delta,
(i+1)\delta)$ evenly over $I$. Formally, $\mu(z) = \mu(-z)$ and for $z
> 0$, $z \in[i\delta,(i+1)\delta)$,
%
%
%e4.1 #&#
\begin{equation}
\label{eqdefmu} \mu(z) = \frac{\gamma([i\delta,(i+1)\delta))}{\delta}.
\end{equation}
Clearly, $\mu$ defines a symmetric distribution on $\mathbb{R}$. We
will show
that for $\delta\ll\varepsilon$ sufficiently small, semi-norms cannot
\textit{distinguish} the product distribution $\mu^k$ from~$\mathcal{N}^k$:

%le4.2 #&#
\begin{lemma}\label{lmepsnetm}
Let $\delta= (2\varepsilon)^{3/2}$. Then, for every semi-norm
$\varphi\dvtx \mathbb{R}^k
\rightarrow\mathbb{R}$,
\[
(1-\varepsilon) \mathop{\mathbb{E}}_{X \leftarrow\mu^k}\bigl[\varphi(X)\bigr] \leq
\mathop{\mathbb{E}}_{Z \leftarrow\mathcal{N}^k}\bigl[\varphi (Z)\bigr] \leq\mathop{\mathbb{E}}_{X \leftarrow\mu^k}
\bigl[\varphi(X)\bigr].
\]
\end{lemma}

We first prove Theorem~\ref{thepsnet} assuming the above lemma, whose proof
is deferred to the next section.

\begin{pf*}{Proof of Theorem~\ref{thepsnet}}
Let $\hat{\mu}$ be the symmetric distribution supported on $\delta
(\mathbb{Z}+
1/2)$ with p.d.f. defined by
\[
\hat{\mu}\bigl(\delta(i+1/2)\bigr) = \mu\bigl(\bigl[i\delta, (i+1)\delta\bigr)\bigr)
\]
for $i \geq0$. Further, let $X \leftarrow\mu^k$, $\hX\leftarrow
\hat{\mu}^k$,
$Z \leftarrow\mathcal{N}^k$.

We claim that $\mathbb{E}[\varphi(\hX)] = (1\pm\varepsilon)\mathbb
{E}[\varphi(Z)]$. Let $Y$
be uniformly distributed on $[-\delta,\delta]^k$ and observe that
random variable $X \equiv\hX+ Y$ in law. Therefore,
%
%e4.2 #&#
%e4.3 #&#
\begin{eqnarray}
\mathbb{E}\bigl[\varphi(X)\bigr] &=& \mathbb{E}\bigl[\varphi(\hX+Y)\bigr] =
\mathbb {E}\bigl[\varphi(\hX)\bigr] \pm\mathbb{E}\bigl[\varphi(Y)\bigr] \nonumber
\\
&=&
\mathbb{E} \bigl[\varphi(\hX)\bigr] \pm\delta\mathbb{E}\bigl[\varphi(Y/\delta)
\bigr]
\nonumber\\[-8pt]\\[-8pt]
&=& \mathbb{E}\bigl[\varphi(\hX)\bigr] \pm\delta\mathop{\mathbb{E}}_{Z' \in_u
[-1,1]^k}\bigl[
\varphi\bigl(Z'\bigr)\bigr] \nonumber
\\
&=& \mathbb{E}\bigl[\varphi (\hX)\bigr] \pm
\delta\mathbb{E}\bigl[\varphi(Z)\bigr]\qquad\mbox{(Lemma \ref
{lmcube})}.\nonumber
\end{eqnarray}
Thus, by Lemma \ref{lmepsnetm},
%
%
%e4.4 #&#
\begin{equation}
\label{eqlm1} \mathbb{E}\bigl[\varphi(\hX)\bigr] = \bigl(1\pm O(\varepsilon)
\bigr) \mathbb {E}\bigl[\varphi(Z)\bigr].
\end{equation}
We next prune $\hat{\mu}^k$ to reduce its support. Define $p\dvtx \mathbb
{R}^k \rightarrow
\mathbb{R}_+$ by $p(x) = \hat{\mu}^k(x)$. Clearly, $p(x)$ being a product
distribution is computable in $\poly(k,1/\varepsilon)$ time.

Let $S =  (\delta(\mathbb{Z}+ 1/2) )^k \cap B_2(3\sqrt {k})$, where
$B_2(r) \subseteq\mathbb{R}^k$ denotes the Euclidean ball of radius
$r$. As
$\varphi$ has Lipschitz\vspace*{1pt} constant bounded by $k^{O(1)}$, a simple
calculation shows that throwing away all\vspace*{1pt} points in the support of $\hX$
outside $S$ does not change $\mathbb{E}[\varphi(\hX)]$ much. It is
easy to check
that for $x \notin S$, $p(x) \leq\exp(-\llVert x\rrVert
_2^2/4)/(2\pi)^{k/2}$. Therefore,
%
%e4.5 #&#
%e4.6 #&#
\begin{eqnarray}
\mathbb{E}\bigl[\varphi(\hX)\bigr] &=& \sum_{x} p(x)
\varphi(x) = \sum_{x \in
S}p(x) \varphi(x) + \sum
_{x \notin S}p(x) \varphi(x)\nonumber
\\
&=& \sum_{x \in S} p(x) \varphi(x) \pm\sum
_{x \notin S} \frac{\exp
(-\llVert x\rrVert _2^2/4)}{(2\pi)^{k/2}}\cdot\bigl( k^{O(1)} \llVert x
\rrVert _2\bigr)
\\
&=& \sum_{x \in S}p(x)\varphi(x) \pm o(1).\nonumber
\end{eqnarray}

From equation (\ref{eqlm1}) and the above equation, we get (recall
that $\mathbb{E}[\varphi
(Z)] \geq1$)
\[
\mathbb{E}\bigl[\varphi(Z)\bigr] = \bigl(1\pm O(\varepsilon)\bigr) \biggl(\sum
_{x \in
S} p(x) \varphi (x) \biggr),
\]
which is what we want to show.

We now reason about the complexity of $S$. First, by a simple covering
argument $|S| < (1/\delta)^{O(k)}$,
\[
|S| < \frac{\mathrm{Vol} (B_2(3\sqrt{k}) + [-\delta,\delta]^k)}{\mathrm{Vol}
([-\delta,\delta]^k)} = (1/\delta)^{O(k)} = (1/\varepsilon)^{O(k)},
\]
where for sets $A, B \subseteq\mathbb{R}^k$, $A+B$ denotes the
Minkowski sum,
and $\mathrm{Vol}$ denotes Lebesgue volume. This size bound almost suffices to
prove Theorem~\ref{thepsnet} except for the complexity of enumerating
elements from $S$. Without loss of generality assume that $R = 3\sqrt {n}/\delta$ is an integer. Then, enumerating elements in $S$ is
equivalent to enumerating integer points in the $n$-dimensional ball of
radius $R$. This can be accomplished by going through the set of
lattice points in the natural lexicographic order, and takes $\poly
(k,1/\varepsilon)$ time and $O(k\log(1/\varepsilon))$ space per
point in $S$.
\end{pf*}

%
%In the remainder of this section we prove \lref{lm:epsnetm}.
%s5 #&#
\section{Proof of Lemma \texorpdfstring{\protect\ref{lmepsnetm}}{4.2}}
Our starting point is the following definition that helps us \textit{compare} multivariate distributions when we are only interested in
volumes of convex sets. We shall follow the notation of \citet{Ball}.

%
%de5.1 #&#
\begin{definition}
Given\vspace*{2pt} two symmetric p.d.f.'s, $f,g$ on $\mathbb{R}^k$, we say that $f$ is less
peaked than $g$ ($f \preceq g$) if for every symmetric convex set $K
\subseteq\mathbb{R}^k$, $f(K) \leq g(K)$.
\end{definition}

We also need the following elementary facts. The first follows from the
unimodality of the Gaussian density and the second from partial integration.

%
%fa5.2 #&#
\begin{fact}
For any $\delta> 0$ and $\mu$ as defined by equation (\ref
{eqdefmu}), $\mu$ is
less peaked than $\gamma$.
\end{fact}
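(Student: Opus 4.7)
The plan is to exploit the fact that $\mu$ is obtained from $\gamma$ by averaging on each dyadic-like interval $I_i = [i\delta, (i+1)\delta)$, combined with the unimodality of the Gaussian density. Since both $\mu$ and $\gamma$ are symmetric about $0$, a symmetric convex set in $\R$ is just an interval $[-a,a]$, and it suffices to prove
\[
\int_0^a \mu(z)\,dz \;\leq\; \int_0^a \gamma(z)\,dz \quad \text{for every } a \geq 0.
\]
Define $F(a) = \int_0^a (\gamma(z) - \mu(z))\,dz$; the goal reduces to showing $F(a) \geq 0$ for all $a \geq 0$.

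Next I would make two observations. First, on each $I_i$ (for $i \geq 0$), the definition \eqref{eq:defmu} gives $\int_{I_i} \mu = \gamma(I_i) = \int_{I_i} \gamma$, so $F$ vanishes at every grid point: $F(i\delta) = 0$ for all integers $i \geq 0$. Second, on $[0,\infty)$ the Gaussian density $\gamma$ is strictly decreasing, and $\mu$ is constant on $I_i$ equal to the average value of $\gamma$ on that interval. By the intermediate value theorem, there is a unique $\xi_i \in I_i$ with $\gamma(\xi_i) = \mu(\xi_i)$, and moreover $\gamma(z) \geq \mu(z)$ for $z \in [i\delta, \xi_i]$ while $\gamma(z) \leq \mu(z)$ for $z \in [\xi_i, (i+1)\delta]$.

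Putting these together: starting from $F(i\delta) = 0$, the integrand $\gamma - \mu$ is nonnegative on $[i\delta, \xi_i]$ so $F$ is nondecreasing there, and then nonpositive on $[\xi_i, (i+1)\delta]$ so $F$ is nonincreasing there, returning to $F((i+1)\delta) = 0$. Hence $F(a) \geq 0$ for every $a \in I_i$, and by induction over $i$, for every $a \geq 0$. This gives $\int_{-a}^a \mu \leq \int_{-a}^a \gamma$ as required.

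There isn't a real obstacle here — the statement is essentially a clean exercise in one-dimensional unimodality plus mass-preservation on each slab; the only thing one must be careful about is writing down the sign change of $\gamma - \mu$ inside each $I_i$ and noting that $F$ reaches its local maximum precisely at $\xi_i$, so it cannot become negative before hitting the next grid point where it vanishes again.
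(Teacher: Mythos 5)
Your proof is correct and is exactly the fleshed-out version of the paper's one-line justification (``follows from the unimodality of the Gaussian density''): symmetric convex sets in $\R$ are intervals, mass is preserved on each slab $[i\delta,(i+1)\delta)$, and monotonicity of $\gamma$ on $[0,\infty)$ forces a single sign change of $\gamma-\mu$ per slab, so the partial-integral function $F$ returns to $0$ at every grid point and never dips below it.
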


%
%fa5.3 #&#
\begin{fact}\label{fctpeaked}
Let $f, g$ be distributions on $\mathbb{R}^k$ with $f \preceq g$. Then
for any
semi-norm $\varphi\dvtx \mathbb{R}^k \rightarrow\mathbb{R}$, $\mathbb
{E}_f[\varphi(x)] \geq\mathbb{E}_g[\varphi(x)]$.
\end{fact}

\begin{pf}
Observe that for any $t > 0$, $\{x\dvtx  \varphi(x) \leq t\}$ is convex. Let
random \mbox{variables} $X \leftarrow f$, $Y \leftarrow g$. Then, by partial
integration, $\mathbb{E}[\varphi(X)] = \int_0^\infty\varphi'(t)
\* \pr[ \varphi(X) > t]
\,dt \geq\int_0^\infty\varphi'(t) \pr[\varphi(Y)> t] \,dt = \mathbb
{E}[\varphi(Y)]$.
\end{pf}

The above statements give us a way to compare the expectations of $\mu$
and $\gamma$ for one-dimensional convex functions. We would now like to
do a similar comparison for the product distributions $\mu^k$ and
$\gamma^k$. For this we use Kanter's lemma [\citet{Kanter77}], which says
that the relation $\preceq$ is preserved under tensoring if the
individual distributions have the additional property of being \textit{unimodal}.

%de5.4 #&#
\begin{definition}
A distribution $f$ on $\mathbb{R}^n$ is unimodal if $f$ can be written
as an
increasing limit of a sequence of distributions each of which is a
finite positively weighted sum of uniform distributions on symmetric
convex sets.
\end{definition}

%Clearly, both $\mu$ and $\gamma$ are unimodal and product of unimodal
%distributions is in turn unimodal.

%
%th5.5 #&#
\begin{theorem}[(Kanter's lemma [\citet{Kanter77}{]}; cf.~\citet{Ball})]\label{thkanter}
Let $\mu_1,\mu_2$ be symmetric distributions on $\mathbb{R}^n$ with
$\mu_1
\preceq\mu_2$ and let $\nu$ be a unimodal distribution on $\mathbb{R}^m$.
Then, the product distributions $\mu_1 \times\nu$, $\mu_2 \times
\nu$
on $\mathbb{R}^n \times\mathbb{R}^m$ satisfy $\mu_1 \times\nu
\preceq\mu_2 \times\nu$.
\end{theorem}

We next show that $\mu$ ``sandwiches'' $\gamma$ in the following sense.

%le5.6 #&#
\begin{lemma}
Let $\nu$ be the p.d.f. of the random variable $y = (1-\varepsilon)x$
for \mbox{$x
\leftarrow\mu$}. Then, for $\delta\leq(2\varepsilon)^{3/2}$, $\mu
\preceq
\gamma\preceq\nu$.
\end{lemma}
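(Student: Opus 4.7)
My plan is to reduce both inequalities in $\mu \preceq \gamma \preceq \nu$ to pointwise comparisons of one-dimensional CDFs. Since all three densities are symmetric on $\R$ and the symmetric convex subsets of $\R$ are exactly the intervals $[-t,t]$, it suffices to compare the half-line CDFs $F_\phi(t) \defn \int_0^t \phi(z)\,dz$ for $\phi \in \{\mu,\gamma,\nu\}$. A key observation I would use throughout is that $F_\mu(i\delta) = F_\gamma(i\delta)$ for every integer $i \geq 0$ (by the construction of $\mu$), that $F_\mu$ is linear on each $[i\delta,(i+1)\delta]$, and that $F_\gamma$ is strictly concave on $[0,\infty)$ since $\gamma$ is decreasing there.

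The first direction $\mu \preceq \gamma$ is immediate: $F_\mu$ is the piecewise-linear secant interpolant of the concave function $F_\gamma$ at the nodes $\{i\delta\}$, and a concave function dominates each of its secants, so $F_\mu(t) \leq F_\gamma(t)$ for all $t \geq 0$, which is exactly $\mu \preceq \gamma$.

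For the harder direction $\gamma \preceq \nu$, I would first use that $\nu$ is the law of $(1-\epsilon)x$ with $x \lfta \mu$, which yields $F_\nu(t) = F_\mu(t/(1-\epsilon))$. So I need to show $F_\gamma((1-\epsilon)s) \leq F_\mu(s)$ for every $s \geq 0$. I would split on the size of $s$. When $s \geq \delta/\epsilon$, the shrinkage buys at least a full grid step, because $(1-\epsilon)s \leq s - \delta \leq i\delta$ where $i$ is the unique index with $s \in [i\delta,(i+1)\delta]$; monotonicity of $F_\gamma$ combined with the agreement at grid points then gives
$$F_\gamma((1-\epsilon)s) \;\leq\; F_\gamma(i\delta) \;=\; F_\mu(i\delta) \;\leq\; F_\mu(s).$$
When $s < \delta/\epsilon$, the shrinkage need not push $s$ below the previous grid point, so I would instead balance the concavity defect against the shrinkage gain quantitatively: the secant-error estimate gives $F_\gamma(s) - F_\mu(s) \leq (\delta^2/8)\cdot \max_{z\in[i\delta,(i+1)\delta]} z\gamma(z)$ (since $F_\gamma'' = \gamma' = -z\gamma$), which in the range $s\delta = O(\epsilon)$ is at most $O(\delta^2)\cdot s\gamma(s)$, whereas the shrinkage contributes $F_\gamma(s) - F_\gamma((1-\epsilon)s) \geq \epsilon s \gamma(s)$. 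With $\delta = (2\epsilon)^{3/2}$, one has $\delta^2 = 8\epsilon^3 \ll \epsilon$, so the shrinkage wins.

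The main obstacle I anticipate is the small-$s$ regime, where the clean ``shift by $\delta$'' reasoning breaks down and the comparison must be made quantitatively. The correct scaling is forced by matching the second-order concavity error $O(\delta^2)$ against the first-order shrinkage gain $\Omega(\epsilon)$, which produces the polynomial relation $\delta = O(\epsilon^{3/2})$; the explicit constant $(2\epsilon)^{3/2}$ gives enough slack everywhere, including the very first interval $s \in [0,\delta]$ (where the ``previous grid point'' argument degenerates) which I would handle separately via the elementary bound $F_\gamma(\delta) \geq \delta\gamma(\delta) \geq (1-\epsilon)\delta\gamma(0)$, using $e^{-\delta^2/2} \geq 1-\epsilon$.
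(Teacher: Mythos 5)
Your plan is correct and is structurally the same as the paper's: reduce $\preceq$ in one dimension to a CDF comparison, exploit the agreement of $F_\mu$ and $F_\gamma$ at the grid points $i\delta$, and split on whether the shrinkage $\epsilon s$ exceeds the grid width $\delta$. Your $s \geq \delta/\epsilon$ case is exactly the paper's Case $i \geq (1-\epsilon)/\epsilon$, handled the same way (shrinkage pushes past a grid point, then monotonicity). In the remaining case the two arguments differ only in the quantitative estimate: the paper works directly with the densities, bounding $\gamma$ between $\gamma(0)(1-\alpha^2/2)$ and $\gamma(0)$ on the short interval $[0,\alpha]$ with $\alpha=(i+1)\delta \leq \delta/\epsilon$, which is why it needs $\alpha^2 \leq 2\epsilon$ and hence $\delta \lesssim \epsilon^{3/2}$; you instead use the secant-error bound for the concave $F_\gamma$, namely $F_\gamma(s)-F_\mu(s) \leq (\delta^2/8)\max z\gamma(z)$, and compare it to the shrinkage gain $\epsilon s\gamma(s)$. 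Since the factor $s\gamma(s)$ then cancels on both sides for $i \geq 1$, your version would actually establish $\gamma \preceq \nu$ already for $\delta = O(\sqrt{\epsilon})$, which is strictly better than the paper's $(2\epsilon)^{3/2}$; the small price is that your ratio degenerates on $[0,\delta]$ (where $s\gamma(s)\to 0$), forcing the separate $i=0$ argument via $F_\gamma(\delta) \geq (1-\epsilon)\delta\gamma(0)$, whereas the paper's cruder pointwise bound handles $i=0$ uniformly. Both comfortably prove the lemma as stated.
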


\begin{pf}
As mentioned above, $\mu\preceq\gamma$. We next show that $\gamma
\preceq\nu$. Intuitively, $\nu$ is obtained by spreading the mass that
$\gamma$ puts on an interval $I = [i\delta, (i+1)\delta)$ evenly on the
\emph{smaller} interval $(1-\varepsilon)I$. The net effect of this
operation is to push the p.d.f. of $\mu$ closer to the origin and for
$\delta$ sufficiently small the inward push from this ``shrinking''
wins over the outward push from going to $\mu$.

Fix an interval $I = [-i \delta(1-\varepsilon) - \theta, i\delta
(1-\varepsilon
) + \theta]$ for $0 \leq\theta< \delta(1-\varepsilon)$. Then
%
%
%e5.1 #&#
%e5.2 #&#
\begin{eqnarray}
\label{eqcases} \nu(I) &=& \nu \bigl( \bigl[-i\delta(1-\varepsilon), i\delta (1-
\varepsilon)\bigr] \bigr) + 2 \nu \bigl( \bigl[i\delta(1-\varepsilon), i\delta (1-
\varepsilon) + \theta\bigr] \bigr)
\\
&=& \gamma \bigl( [-i\delta,i\delta] \bigr) + \frac{2 \theta
\cdot
\gamma( [i\delta,(i+1)\delta) )}{\delta(1-\varepsilon)}.
\end{eqnarray}
We now consider two cases.
\begin{longlist}[\textit{Case} 1:]
\item[\textit{Case} 1:] $i \geq(1-\varepsilon)/\varepsilon$ so that $i\delta
(1-\varepsilon) +
\theta\leq i\delta$. Then, from the above equation,
\[
\nu(I) \geq\gamma \bigl( [-i\delta, i \delta] \bigr) \geq \gamma \bigl( \bigl[-i
\delta(1-\varepsilon)-\theta, i\delta(1-\varepsilon)+\theta \bigr] \bigr) =
\gamma(I).
\]

\item[\textit{Case} 2:] $i < (1-\varepsilon)/\varepsilon$. Let $\alpha= (i+1)\delta
= \delta
/\varepsilon$. Then, as $1 - x^2/2 \leq e^{-x^2/2} \leq1$,
\[
\gamma\bigl((i\delta, i\delta+ \theta]\bigr) \leq\theta\cdot\gamma(0), \gamma\bigl(
\bigl[i\delta,(i+1)\delta\bigr) \bigr) \geq\delta\cdot\gamma(0) \cdot \bigl(1-
\alpha^2/2\bigr).
\]
\end{longlist}

Therefore,
\begin{eqnarray*}
\nu(I) &=& \gamma(I) - 2 \gamma \bigl( \bigl(i\delta, i\delta (1-\varepsilon) +
\theta\bigl] \bigr) + \frac{2 \theta\cdot\gamma( [i\delta,(i+1)\delta)
)}{\delta(1-\varepsilon)}
\\
&\geq&\gamma(I) - 2 \gamma \bigl( (i\delta, i\delta+ \theta] \bigr) + \frac{2 \theta\cdot\gamma( [i\delta,(i+1)\delta) )}{\delta
(1-\varepsilon)}
\\
&\geq&\gamma(I) - 2 \theta\gamma(0) + \frac{2 \theta\cdot\delta
\cdot\gamma(0) \cdot(1-\alpha^2/2)}{\delta(1-\varepsilon)}
\\
&=& \gamma(I) + \frac{2\theta\gamma(0)}{1-\varepsilon} \cdot \bigl(\varepsilon- \alpha^2/2
\bigr) \geq\gamma(I),
\end{eqnarray*}
for $\alpha^2 \leq2 \varepsilon$, that is, if $\delta\leq
(2\varepsilon)^{3/2}$.
\ignore{
Let $\delta\ll\varepsilon$ be sufficiently small so that the p.d.f. of
$\gamma$ is nearly constant in the interval $[0,(i+1)\delta]$, that is
$\gamma([\alpha,\beta]) = (\beta-\alpha) \gamma(0) \pm O((\beta
-\alpha
)^2)$ for $0 < \alpha< \beta< (i+1) \delta$. Then, by equation (\ref
{eqcases}), as $\theta< \delta(1-\varepsilon)$,
\begin{eqnarray*}
\nu(I) &\geq&\gamma(I) - 2 \gamma ( \bigl(i\delta, i\delta (1-\varepsilon) +
\theta] \bigr) + \frac{2 \theta\cdot\gamma( [i\delta,(i+1)\delta
) )}{\delta(1-\varepsilon)}
\\
&\geq&\gamma(I) - 2 \gamma ( (i\delta, i\delta+ \theta] ) + \frac{2 \theta\cdot\gamma( [i\delta,(i+1)\delta) )}{\delta
(1-\varepsilon)}
\\
&\geq&\gamma(I) - 2 \theta\gamma(0) - O\bigl(\theta^2\bigr) +
\frac{2 \theta
(\delta\gamma(0) - O(\delta^2))}{\delta(1-\varepsilon)}
\\
&=& \gamma(I) + \frac{2\theta\gamma(0)\varepsilon}{1-\varepsilon} - O(\theta \delta) \geq\gamma(I)
\end{eqnarray*}
}
%for $\delta=c \epsilon$, where $c > 0$ is a sufficiently small
%constant.
\end{pf}

Lemma \ref{lmepsnetm} follows easily from the above two claims.

\begin{pf*}{Proof of Lemma \ref{lmepsnetm}}
Clearly, $\mu,\nu,\gamma$ are unimodal and product of unimodal
distributions is unimodal. Thus, from the above lemma and iteratively
\mbox{applying} Kanter's lemma we get $\mu^k \preceq\gamma^k \preceq\nu^k$.
Therefore, by Fact \ref{fctpeaked}, for any semi-norm $\varphi$,
\[
\mathop{\mathbb{E}}_{\mu^k}\bigl[\varphi(X)\bigr] \geq\mathop{\mathbb{E}}_{\gamma^k}
\bigl[\varphi (Y)\bigr] \geq\mathop{\mathbb{E}}_{\nu^k}\bigl[\varphi (X)\bigr] =
\mathop{\mathbb{E}}_{\mu^k}\bigl[\varphi\bigl((1-\varepsilon)X\bigr)\bigr] = (1-
\varepsilon)\mathop{\mathbb{E}}_{\mu^k}\bigl[\varphi(X)\bigr].
\]\upqed
\end{pf*}

We now prove the auxiliary lemma we used in proof of Theorem~\ref{thepsnet}.

%le5.7 #&#
\begin{lemma}\label{lmcube}
Let $\rho$ be the uniform distribution on $[-1,1]$. Then, $\gamma
\preceq\rho$ and for any semi-norm $\varphi\dvtx \mathbb{R}^k
\rightarrow\mathbb{R}$, $\mathop{\mathbb{E}}_{\rho
^k}[\varphi(x)] \leq\mathop{\mathbb{E}}_{\gamma^k}[\varphi(x)]$.
\end{lemma}

\begin{pf}
It is easy to check that $\gamma\preceq\rho$. Then, by Kanter's
lemma $\gamma^k \preceq\rho^k$ and the inequality follows from Fact
\ref{fctpeaked}.
\end{pf}

%s6 #&#
\section{A PTAS for supremum of Gaussian processes}
Our main theorem, Theorem \ref{thmain}, follows immediately from
Lemma \ref
{lmderandjl} and Theorem \ref{thepsnetintro} applied to the
semi-norm $\varphi
\dvtx \mathbb{R}^k \rightarrow\mathbb{R}$, defined by $\varphi(x) = \sup_{i \leq m} |\langle A, ( v_i),x\rangle|$.
%s7 #&#
\section{Lower bound for oblivious estimators}\label{secappendix}
We now show that Theorem \ref{thepsnet} is optimal: any oblivious linear
estimator for semi-norms as in the theorem must make at least
$(C/\varepsilon)^{k}$ queries for some constant $C > 0$.

Let $S \subseteq\mathbb{R}^k$ be the set of query points of an oblivious
estimator. That is, there exists a function $f\dvtx \mathbb{R}_+^S
\rightarrow\mathbb{R}_+$ such
that for any semi-norm $\varphi\dvtx \mathbb{R}^k \rightarrow\mathbb
{R}_+$, $f((\varphi(x)\dvtx  x \in S)) =
(1\pm\varepsilon) \mathop{\mathbb{E}}_{Y \leftarrow\mathcal{N}^k}[\varphi
(Y)]$. We will assume that $f$
is monotone in the following sense: $f(x_1,\ldots,x_{|S|}) \leq
f(y_1,\ldots,y_{|S|})$ if $0 \leq x_i \leq y_i$ for all $i$. This is
clearly true for any linear estimator (and also for the median
estimator). Without loss of generality suppose that $\varepsilon< 1/4$.

The idea is to define a suitable semi-norm based on $S$: define
$\varphi\dvtx \mathbb{R}
^k \rightarrow\mathbb{R}$ by $\varphi(x) = \sup_{u \in S}|\langle
u /\llVert u\rrVert _2,x\rangle|$. It is
easy to check that for any $v \in S$, $\llVert v\rrVert _2 \leq
\varphi(v)$.
Therefore, the output of the oblivious estimator when querying the
Euclidean norm is at most the output of the estimator when querying
$\varphi$. In particular,
%
%
%e7.1 #&#
\begin{eqnarray}\label{eqapp1}
(1-\varepsilon) \mathop{\mathbb{E}}_{Y \leftarrow\mathcal{N}^k}\bigl[\llVert Y\rrVert
_2\bigr] &\leq& f\bigl(\bigl(\llVert x\rrVert _2\dvtx  x \in S
\bigr)\bigr) \leq f\bigl(\bigl(\varphi(x)\dvtx  x \in S\bigr)\bigr)
\nonumber\\[-8pt]\\[-8pt]
&\leq& (1+\varepsilon)\mathop{\mathbb{E}}_{Y
\leftarrow\mathcal{N}^k}\bigl[\varphi(Y)\bigr].\nonumber
\end{eqnarray}
We will argue that the above is possible only if $|S| > (C/\varepsilon
)^k$. Let $\mathcal{S}^{k-1}$ denote the unit sphere in $\mathbb
{R}^k$. For the remaining
argument, we shall view $Y \leftarrow\mathcal{N}^k$ to be drawn as $Y
= R X$, where
$X \in\mathcal{S}^{k-1}$ is uniformly random on the sphere, and $R
\in\mathbb{R}$ is
independent of $X$ and has a Chi-squared distribution with $k$ degrees
of freedom. Let $S(\varepsilon) = \bigcup_{u \in S} \{y \in\mathcal
{S}^{k-1}\dvtx  | \langle u /\llVert u\rrVert _2,y\rangle| \geq1 -
4\varepsilon\}$.

Now,\vspace*{1pt} by a standard volume argument, for any $y \in\mathcal{S}^{k-1}$,
$\pr_X[|\langle X,  y\rangle| \geq1 - 4\varepsilon] <
(O(\varepsilon))^k$. Thus, by a union bound,
$p = \pr_X[X \in S(\varepsilon)] < |S| \cdot(O(\varepsilon))^k$. Further,
for any $y \in\mathcal{S}^{k-1}\setminus S(\varepsilon)$, $\varphi
(y) < 1-4\varepsilon$. Therefore,
\begin{eqnarray*}
\mathop{\mathbb{E}}_{X}\bigl[\varphi(X)\bigr] &=& \pr\bigl[X \notin S(
\varepsilon)\bigr] \cdot \mathbb{E}\bigl[\varphi(X) | X \notin S(\varepsilon)
\bigr]
\\
&&{}  + \pr\bigl[X \in S(\varepsilon)\bigr] \cdot\mathbb {E}\bigl[\varphi(X) | X
\in S(\varepsilon)\bigr]
\\
&\leq&
(1-p) (1-4\varepsilon) + p.
\end{eqnarray*}
Thus
%
%
%e7.2 #&#
\begin{eqnarray}\label{eqapp2}
\mathbb{E}\bigl[\varphi(Y)\bigr] &=& \mathbb{E}\bigl[\varphi(R X)
\bigr] = \mathbb{E}[R] \cdot\mathbb{E}\bigl[\varphi(X)\bigr]
\nonumber\\[-8pt]\\[-8pt]
&\leq& \mathbb{E}\bigl[\llVert Y\rrVert _2\bigr] \cdot\bigl((1-p) (1-4\varepsilon) + p\bigr).\nonumber
\end{eqnarray}
Combining equations (\ref{eqapp1}) and (\ref{eqapp2}), we get
\[
1 - \varepsilon\leq(1+\varepsilon)\cdot\bigl((1-p) (1-4\varepsilon) + p\bigr) <
1-3\varepsilon+ 2p.
\]
As $p < |S| \cdot(O(\varepsilon))^k$, the above leads to a contradiction
unless $|S| > (C/\varepsilon)^{k}$ for some constant $C > 0$.

% zodis "Acknowledgments" paliekamas pagal autoriu

%suskaldyti doi

% imsref loaded by linak, 2014-03-25 16:14:45

\printaddresses

\end{document}